\algrenewcommand\textproc{}
\newcolumntype{M}[1]{>{\centering\arraybackslash}m{#1}}
\newcolumntype{C}[1]{>{\centering\arraybackslash}m{#1}}
\theoremstyle{plain}
\newtheorem{proposition}{\textbf{Proposition}}
\newtheorem{theorem}{\textbf{Theorem}}
\newtheorem{assumption}{Assumption}
\theoremstyle{definition}
\newtheorem{definition}{\textbf{Definition}}
\newtheorem{problem}{\textbf{Problem}}
\newtheorem{remark}{Remark}
\newtheorem{example}{\textbf{Example}}
\useunder{\uline}{\ul}{}
\title{\LARGE \bf Backstepping Reach-avoid Controller Synthesis for Multi-input Multi-output Systems with Mixed Relative Degrees}
\author{Jianqiang Ding, Dingran (Tintin) Yuan and Shankar A. Deka,
\IEEEmembership{Member, IEEE}
\thanks{ Department of Electrical Engineering, Aalto University, Finland. Email:
{\tt\small \{jianqiang.ding, tintin.yuan, shankar.deka\}}@aalto.fi.}
}
\begin{document}
    \maketitle
    \thispagestyle{empty}
    \pagestyle{empty}

    \begin{abstract}
    Designing controllers with provable formal guarantees has become an urgent requirement for cyber-physical systems in safety-critical scenarios.
    Beyond addressing scalability in high-dimensional implementations, 
    controller synthesis methodologies separating safety and reachability objectives may risk optimization infeasibility due to conflicting constraints,
    thereby significantly undermining their applicability in practical applications.
    In this paper,
    by leveraging feedback linearization and backstepping techniques, we present a novel framework for constructing provable reach-avoid formal certificates tailored to multi-input multi-output systems. Based on this, we developed a systematic synthesis approach for controllers with reach-avoid guarantees, which ensures that the outputs of the system eventually enter the predefined target set while staying within the required safe set.
    Finally, we demonstrate the effectiveness of our method through simulations.
\end{abstract}
    \section{Introduction}



The increasing complexity of cyber-physical systems (CPS) in real-world applications, along with their demand for more sophisticated tasks, imposes stricter formal requirements on controller design, especially in safety-critical scenarios.
In such contexts, controllers must not only avoid unsafe states but also steer the system to reach the predefined objectives.
This dual requirement, which necessitates formal guarantees that systems initialized from safe configurations will persistently maintain safety until achieving designated goals, has positioned the synthesis of controllers guaranteeing reach-avoid property as a pivotal research focus within the control community.


A prominent approach to achieving formal guarantees is through energy-like scalar functions that 
make statements about the system's long-term behavior
\cite{choi2021robust, ames2016control, ames2019control, nguyen2021robust,  deka2018robust}, enabling real-time safety or stability evaluation. These functions can integrate easily as constraints in control frameworks like model predictive control (MPC), leading to the rise of safety filters as a key paradigm for controller synthesis with formal guarantees \cite{wabersich2023data, zeng2021safety, yuan2024safe}.
Among these developments, control Lyapunov functions (CLFs) and control barrier functions (CBFs) \cite{ames2016control, ames2019control, nguyen2021robust} are notable for their mathematical rigor and real-time optimization compatibility, making them essential for safety-critical control in robotics, power systems, and beyond. Their unification through quadratic program (QP) frameworks \cite{ames2016control, ames2019control} enables control objectives while ensuring stability and safety specifications. However, a key challenge remains in identifying suitable scalar functions for formal safety certificates.


Computational techniques such as sum-of-squares (SOS) optimization and Hamilton-Jacobi reachability analysis \cite{zhang2023efficient, bansal2017hamilton, chen2025distributionally} have proven effective for computing value functions and invariant sets in low-dimensional systems. However, their scalability limitations have spurred ongoing research for complex, high-dimensional systems. Recent advances, such as backstepping-based formal certificate construction \cite{vaidyanathan2020backstepping}, have significantly improved scalability in establishing formal guarantees for high-dimensional dynamics. For instance, combining backstepping with control barrier functions (CBFs) and control Lyapunov functions (CLFs) ensures set invariance for high-order systems \cite{taylor2022safe, kim2023safe, koga2023safe}, with applications in multirotor flight control. Despite these advancements, these methods are limited by their reliance on specific dynamic structures and controllability assumptions, restricting their applicability to systems without a cascaded architecture. 

Alternative approaches, such as exponential CBFs and high-order CBFs \cite{nguyen2016exponential, xiao2019control, xiao2021high}, address some of these limitations by recursively differentiating safety constraints until control inputs explicitly appear in the highest-order derivatives. These methods then enforce CBF-inspired conditions on these derivatives, eliminating the need for cascaded structures under the assumption of a uniform relative degree. This process is conceptually aligned with feedback linearization for high-relative-degree systems and has inspired the synthesis of CBFs for partially input-output linearizable systems \cite{cohen2024constructive, xu2018constrained, abel2023prescribed}, with applications in safety control for underactuated robotic systems.
Similarly, recent learning-based techniques \cite{dawson2022learning, hsu2021safety} leverage memory of object locations in environments but often rely on slack variables to relax goal-reaching objectives in new or uncertain scenarios. Both approaches highlight the need for more integrated frameworks that jointly address safety and stability without compromising computational feasibility. However, treating safety and stability formal guarantees as separate certificates risks computational infeasibility during optimization due to inherent conflicts between these constraints.


In contrast, recent developments in optimization-based reach-avoid controller synthesis propose a unified methodology with rigorous safety guarantees\cite{xue2024reach} to address this challenge. This approach originates from the characterization of the reach-avoid set \cite{xue2023reach} through the zero-level set of the synthesized scalar functions, which verifies system compliance with reachability and safety requirements. Beyond simplifying design procedures by eliminating redundant value functions, this method can ensure persistent provable reach-avoid guarantees throughout the evolving of the dynamics, thereby preventing the infeasibility of the problem due to constraint conflicts during the optimization.

In practice, multi-input multi-output (MIMO) systems are quite common in daily life and safety constraints are often imposed on observable outputs rather than system states. Motivated by this output-to-input relationship and aforementioned challenges, we in this paper present a systematic reach-avoid controller synthesis framework for MIMO systems. Our main contributions can be summarized as follows.
\begin{itemize}
    \item We propose a constructive framework for generating formal reach-avoid certificates in MIMO systems,
    which characterize a subset of the given safe set
    where all states within it can achieve the specified reach-avoid objectives with synthesized controllers.
    \item 
    Building upon this foundation, we establish a reach-avoid controller synthesis approach for polynomial dynamical systems, and demonstrate the proposed 
    pipeline in numerical simulations.
\end{itemize}

\textit{Notations:} We use $\mathbb{R}$ to denote the set of real numbers, and $\mathbb{R}
^{n}$ denotes the $n-$dimensional Euclidean space. A $\mathcal{C}^{\infty}(\mathcal{X})$ function
refers to the set of smooth functions defined 
on the domain $\mathcal{X}$. The
closure of a set $\mathcal{X}$ is denoted by $\overline{\mathcal{X}}$.
We denote column vectors using lowercase boldface letters and matrices with upper boldface letters, such as $\bm{v}$ and $\bm{M}$.
The ring of all multivariate polynomials in a variable $\bm{x}$ is denoted by $\mathbb{R}[\bm{x}]$. $\sum [\bm{x}]$ is used to represent the sum-of-squares polynomials over variable $\bm{x}$, i.e.,
\begin{align*}
    \sum [\bm{x}] = \{ p \in \mathbb{R}[\bm{x}] \,|\, p = \sum_{i=1}^{k}q_{i}^{2}, q_{i}\in \mathbb{R}[\bm{x}], i=1,\cdots, k \}.
\end{align*}
















    \section{Preliminaries and Background}




Consider a nonlinear system
\begin{align}
    \dot{\bm{x}} & = \bm{f}(\bm{x}) + \sum_{j=1}^{m}\bm{g}_{j}(\bm{x})u_{j}\label{eq: dynamic system} 
\end{align}
with $\bm{f}: \mathcal{X}\rightarrow \mathbb{R}^{n}$ and
$\bm{g}_{j}: \mathcal{X}\rightarrow \mathbb{R}^{n},~j\in\{1,\cdots, m\}$ are
locally Lipschitz, $\bm{x}\in \mathcal{X}\subseteq \mathbb{R}^{n}$ and admissible
control input $u_{j}\in \mathbb{R}, j\in\{1, \cdots, m\}$ such that the closed-loop system \eqref{eq: dynamic system} is locally Lipschitz.

Let $\bm{y} \in \mathbb{R}^m$ be the outputs of the system \eqref{eq: dynamic system} defined by a smooth function $h: \mathcal{X} \rightarrow \mathbb{R}^m$ of class $\mathcal{C}^{\infty} (\mathcal{X})$. 
We formulate the output reach-avoid controller synthesis problem as follows.
\begin{problem}
    (Reach-avoid Controller Synthesis) 
    Given a safe set $\mathcal{C}$ and a target set $\mathcal{X}^r$, both defined through the constraints on output $\bm{y}$, our objective is to construct a set $\mathcal{S} \subseteq \mathcal{C}$ and a system state feedback controller $\bm{k}(\bm{x}): \mathcal{X} \rightarrow \mathbb{R}^m$ such that, starting from any initial state within $\mathcal{S}$, the trajectory of the system \eqref{eq: dynamic system} with controller $\bm{k}(\bm{x})$ remains within $\mathcal{C}$ and eventually enters $\mathcal{X}^r$.
\end{problem}

\begin{assumption}
    \label{assumption on sets} 
    Furthermore, we make the following assumptions:
    \begin{itemize}
        \item  $\mathcal{C} = \{\bm{x} \in \mathcal{X} | \ \psi(\bm{y}(\bm{x})) > 0 \}$
        \item $\mathcal{X}^r = \{ \bm{x} \in \mathcal{X} | \phi (\bm{y}(\bm{x})) < 0 \}$
        \item $\mathcal{C}$ has no isolated point
        \item $\mathcal{C}\cap \mathcal{X}^{r}$ is not empty and has no isolated point,
    \end{itemize}
    where $\psi: \mathbb{R}^m \rightarrow \mathbb{R}$ and $\phi: \mathbb{R}^m \rightarrow \mathbb{R}$ are both continuously differentiable functions.
\end{assumption}



\subsection{Reach-avoid Controller Synthesis}
\label{subsec: reach-avoid controller synthesis}


\begin{definition}
    (Exponential Control Guidance-barrier Functions (ECGBFs)) Given the safe set $\mathcal{C}$ and target set $\mathcal{X}^{r}$ satisfying assumption \ref{assumption on sets},
    $\psi(\bm{y}(\bm{x})): \mathbb{R}^{n}\rightarrow \mathbb{R}$ is an ECGBF if there exists $\lambda>0$ such that
    \begin{align} \label{eq: ECGBF constraint}
        \sup_{\bm{u}}\mathcal{L}_{\psi,\bm{u}}\geq \lambda \psi(\bm{y}(\bm{x})), \forall \bm{x} \in \overline{\mathcal{C}\setminus \mathcal{X}^r}, 
    \end{align}
    where
    $\mathcal{L}_{\psi,\bm{u}}= \frac{\partial \psi(\bm{y}(\bm{x}))}{\partial \bm{x}} \cdot \bm{f}(\bm{x}) +
    \sum_{j=1}^k \frac{\partial \psi(\bm{y}(\bm{x}))}{\partial \bm{x}}  \bm{g}_j(\bm{x}) \bm{u}_j(\bm{x})$.
\end{definition}

\begin{theorem}\cite{xue2024reach} \label{theorem: reach-avoid controller synthesis}
    Given the safe output set $\mathcal{C}$ and target output set $\mathcal{X}^{r}$
    satisfying assumption \ref{assumption on sets}, if the function
    $\psi(\bm{y}(\cdot)): \mathbb{R}^{n}\rightarrow \mathbb{R}$ is an ECGBF, then any Lipschitz continuous controller $\bm{u}(\bm{x}) \in \mathcal{K}_{e}(\bm{x})$ is a reach-avoid controller with respect to the safe set $\mathcal{C}$ and target set $\mathcal{X}^{r}$ where
    \begin{align}
        \mathcal{K}_{e}(\bm{x}) = \{\bm{u}(\bm{x})\ \in \mathbb{R}^m | \ \text{constraints \eqref{eq: reach-avoid controller} holds}\}
    \end{align}
    with
    \begin{equation}
        \label{eq: reach-avoid controller}\left\{
        \begin{split}
            \mathcal{L}_{\psi,\bm{u}(\bm{x})} &\geq \lambda \psi(\bm{y}(\bm{x})), \forall \bm{x}\in \overline{\mathcal{C}\setminus \mathcal{X}^r} \\
            \lambda &> 0.
        \end{split}
        \right.
    \end{equation}
\end{theorem}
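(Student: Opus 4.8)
The plan is to prove directly that any trajectory of the closed-loop system \eqref{eq: dynamic system} driven by a controller $\bm{u}(\bm{x}) \in \mathcal{K}_e(\bm{x})$ satisfies the two defining properties of reach-avoidance---forward invariance of the safe set $\mathcal{C}$ and finite-time entry into the target set $\mathcal{X}^r$---and to show that both follow from the single exponential inequality in \eqref{eq: reach-avoid controller}. The starting observation is that, along the closed-loop flow, $\mathcal{L}_{\psi,\bm{u}(\bm{x})} = \frac{\partial \psi(\bm{y}(\bm{x}))}{\partial \bm{x}}\dot{\bm{x}}$ is precisely the time derivative $\dot{\psi}$ of $\psi(\bm{y}(\bm{x}(t)))$, so the membership condition $\bm{u}(\bm{x}) \in \mathcal{K}_e(\bm{x})$ reads $\dot{\psi} \geq \lambda \psi$ whenever the state lies in $\overline{\mathcal{C}\setminus\mathcal{X}^r}$.

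For the \emph{avoid} part, I would fix an initial state with $\psi(\bm{y}(\bm{x}(0))) > 0$ that has not yet reached $\mathcal{X}^r$. On any maximal time interval over which the trajectory remains in $\overline{\mathcal{C}\setminus\mathcal{X}^r}$, the comparison (Gr\"onwall) lemma applied to $\dot{\psi} \geq \lambda\psi$ yields $\psi(\bm{y}(\bm{x}(t))) \geq \psi(\bm{y}(\bm{x}(0)))\,e^{\lambda t} > 0$. Hence $\psi$ stays strictly positive and in fact strictly increases, so the trajectory can never touch the boundary $\partial\mathcal{C} = \{\psi = 0\}$; it therefore remains in $\mathcal{C}$ for as long as it has not entered $\mathcal{X}^r$, which is exactly the safety requirement.

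For the \emph{reach} part I would argue by contradiction: suppose the trajectory never enters $\mathcal{X}^r$. Then it stays in $\overline{\mathcal{C}\setminus\mathcal{X}^r}$ for all $t \geq 0$, so the bound $\psi(\bm{y}(\bm{x}(t))) \geq \psi(\bm{y}(\bm{x}(0)))\,e^{\lambda t}$ holds for all time and forces $\psi \to +\infty$. Since $\psi$ is continuous, this contradicts its being bounded on the region to which the trajectory is confined, so the trajectory must exit $\overline{\mathcal{C}\setminus\mathcal{X}^r}$ in finite time; because the safety argument rules out any exit across $\partial\mathcal{C}$, the only remaining exit is into $\mathcal{X}^r$, which establishes finite-time reachability. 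The main obstacle---and the point I would treat most carefully---is this boundedness step: it requires either compactness of $\overline{\mathcal{C}\setminus\mathcal{X}^r}$ (or of its image under $\bm{y}$) so that $\psi$ attains a finite maximum there, or an equivalent coercivity hypothesis that precludes finite escape. I would also verify the borderline behavior on $\partial\mathcal{C}$, where $\psi = 0$ but $\dot{\psi} \geq 0$, to confirm the trajectory cannot slide out of $\mathcal{C}$, and identify the admissible initial set $\mathcal{S} \subseteq \mathcal{C}$ as precisely those states whose forward orbit remains in $\overline{\mathcal{C}\setminus\mathcal{X}^r}$ until target entry.
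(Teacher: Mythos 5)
The paper gives no proof of this theorem at all --- it is imported verbatim from \cite{xue2024reach} --- so there is nothing internal to compare against; your reconstruction is the standard comparison-lemma argument and is essentially the proof given in that reference. Both halves are sound: Gr\"onwall applied to $\dot\psi \geq \lambda\psi$ on $\overline{\mathcal{C}\setminus\mathcal{X}^r}$ keeps $\psi$ bounded away from the zero level set $\partial\mathcal{C}$ (avoidance), and the forced exponential growth of $\psi$ yields the contradiction for reachability. The one point you flag --- that the contradiction needs $\psi$ to be bounded on $\overline{\mathcal{C}\setminus\mathcal{X}^r}$, and the trajectory to have no finite escape time --- is a genuine hypothesis, not a removable technicality; it is supplied in the cited source by taking the safe set to be bounded, an assumption the present paper omits from Assumption~\ref{assumption on sets} but implicitly relies on (all of its examples use compact safe sets). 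So your proposal is correct modulo making that boundedness assumption explicit, which you do.
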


For polynomial systems with semialgebraic safe set and target set, the reach-avoid controller synthesis reduces to solving the following convex optimization problem,
    \begin{align*}
    &\bm{k}(\bm{x}) = \arg\min_{\bm{u}(\bm{x})} \quad \delta \\
    \text{s.t.} \quad 
    &\mathcal{L}_{\psi,\bm{u}} - \lambda \psi(\bm{y}(\bm{x})) \geq -\delta, \forall \bm{x} \in \overline{\mathcal{C} \setminus \mathcal{X}^r} \\
    & \delta \geq 0 \\
    &\lambda \geq \epsilon
    \end{align*}
    where 
    $\epsilon > 0$ is a user-defined factor. 
The problem can be formulated and solved via SOS optimization technique as follows:
\begin{algorithm}  
    \begin{align} \label{algo: sos reach-avoid controller}
    &\bm{k}(\bm{x}) = \arg\min_{\bm{u}(\bm{x})} \quad \delta \notag \\
    \text{s.t.} \quad 
    &\mathcal{L}_{\psi,\bm{u}}
    - \lambda \psi + \delta - s_0(\bm{x}) \psi - s_1(\bm{x}) \phi \in \sum [\bm{x}]  \notag \\
    &\delta> 0  \\
    &\lambda > \epsilon. \notag
    \end{align} 
    where $s_0(\bm{x}),s_1(\bm{x}) \in \sum[\bm{x}]$.
\end{algorithm}

If $\delta=0$, then $\bm{k}(\bm{x})$ is a reach-avoid controller with respect to the safe set $\mathcal{C}$ and target set $\mathcal{X}^r$.

\subsection{Feedback Linearization}

Consider a MIMO control-affine nonlinear system of the form \eqref{eq: dynamic system} with output $\bm{y}= h(\bm{x})$. Differentiating the $i^\text{th}$ output $y_{i}$ with respect to time yields
\begin{align}
    \dot{y}_{i}                                      
    = \mathcal{L}_{f}h_{i}(\bm{x})+ \sum_{j=1}^{m}(\mathcal{L}_{g_j}h_{i}(\bm{x})) u_{j}.
\end{align}
Observe that if $\mathcal{L}_{g_j}h_{i}(\bm{x})= 0$ for all $j = 1, \cdots, m$, then the input does
not appear in $\dot{y}_{i}$. Assume that $y_{i}$ has to be differentiated with
respect to time $r_{i}$ times before at least one component of the control input
vector $u$ explicitly appears in a time derivative of $y_{i}$, then the $r_{i}^\text{th}$
derivative of $y_{i}$ is given by
\begin{align}
    y_{i}^{(r_i)}                                                         
    = \mathcal{L}_{f}^{r_i}h_{i}(\bm{x})+ \sum_{j=1}^{m}\mathcal{L}_{g_j}(\mathcal{L}_{f}^{r_i -1 }h_{i}(\bm{x})) u_{j}.
\end{align}
The integer $r_{i}$ is defined as the smallest integer such that
\begin{align}
    \mathcal{L}_{g_j}L_{f}^{k}h_{i}(\bm{x})     & = 0, \quad 1 \leq j \leq m, 0 \leq k \leq r_{i}-2 \label{eq: relative degree 0}     \\
    \mathcal{L}_{g_j}\mathcal{L}_{f}^{r_i-1}h_{i}(\bm{x}) & \neq 0, \quad \text{for at least one }1 \leq j \leq m. \label{eq: relative degree 1}
\end{align}
For specific $h_{i}(\bm{x})$, we can define
\begin{align} \label{eq: eta_i}
    \bm{\eta}^{i}= 
    \begin{bmatrix}
    \eta_{1}^{i}\\ 
    \vdots \\ 
    \eta_{r_i}^{i}
    \end{bmatrix} = 
    \begin{bmatrix}
    y_{i}(\bm{x}) \\ 
    \vdots \\ 
    \mathcal{L}_{f}^{r_i-1}y_{i}(\bm{x})
    \end{bmatrix} = 
    \begin{bmatrix}
    h_{i}(\bm{x}) \\ 
    \vdots \\ 
    \mathcal{L}_{f}^{r_i-1}h_{i}(\bm{x})
    \end{bmatrix} \in \mathbb{R}^{r_i}.
\end{align}
Then we have
\begin{align}
    \dot{\bm{\eta}}^i                                                         
    = \begin{bmatrix}
    0 & 1 & 0 & \cdots & 0 \\ 
    0 & 0 & 1 & \cdots & 0 \\ 
    \vdots & \vdots & \vdots & \ddots & \vdots \\ 
    0 & 0 & 0 & \cdots & 1 \\ 
    0 & 0 & 0 & \cdots & 0
    \end{bmatrix} 
    \begin{bmatrix}
    h_{i}(\bm{x}) \\
    \vdots \\
    \mathcal{L}_{f}^{r_i-1}h_{i}(\bm{x})
    \end{bmatrix} + 
    \begin{bmatrix}
    0 \\ 
    \vdots \\
    0 \\ 
    1
    \end{bmatrix} v_{i},
\end{align}
where
$v_{i}= \mathcal{L}_{f}^{r_i}h_{i}(\bm{x})+ \sum_{j=1}^{m}\mathcal{L}_{g_j}(L_{f}^{r_i -1 }h_{i}(\bm{x})) u_{j}$. 
For single-input single-output (SISO) systems with $m=1$,
\eqref{eq: relative degree 0} and \eqref{eq: relative degree 1} are the definitions of the
relative degree of $\bm{y}=h(\bm{x})$, with $h: \mathbb{R}^{n}\rightarrow \mathbb{R}$. The concept of relative degree is extended to MIMO systems
systems as follows
\begin{definition}
    (Vector relative degree). The system \eqref{eq: dynamic system} has a vector relative degree $\{r_{1}, \cdots, r_{m}\}$ at a point $\bm{x}_{0}$ if
    \begin{itemize}
        \item
            \begin{align}
                \mathcal{L}_{g_j}\mathcal{L}_{f}^{k}h_{i}(\bm{x}) = 0, \quad 0 \leq k \leq r_{i}-2,
            \end{align}
            for all $1 \leq j \leq m$, for all $1 \leq i \leq m$, and for all $\bm{x}$
            in a neighborhood of $\bm{x}_{0}$.
        \item the 
        matrix $\bm{A}(\bm{x})$ has rank $m$ at a neighborhood of $\bm{x}_0$, where
            \begin{align} \label{matrix A(x)}
                \bm{A}(\bm{x}) = 
                \begin{bmatrix}
                \mathcal{L}_{g_1}L_{f}^{r_1 -1 }h_{1}(\bm{x})&\cdots&\mathcal{L}_{g_m}\mathcal{L}_{f}^{r_1 -1 }h_{1}(\bm{x}) \\
                \vdots&\ddots&\vdots \\ 
                \mathcal{L}_{g_1}\mathcal{L}_{f}^{r_m -1 }h_{m}(\bm{x})&\cdots&\mathcal{L}_{g_m}\mathcal{L}_{f}^{r_m -1 }h_{m}(\bm{x})
                \end{bmatrix}
            \end{align}
    \end{itemize}
\end{definition}

A system \eqref{eq: dynamic system} with a well-defined vector relative degree can be written as

\begin{align}
    \begin{bmatrix}
    y_{1}^{r_1}\\ 
    \vdots \\ 
    y_{m}^{r_m}
    \end{bmatrix} = 
    \begin{bmatrix}
    \mathcal{L}_{f}^{r_1}h_{1}(\bm{x})\\ 
    \vdots \\ 
    \mathcal{L}_{f}^{r_m}h_{m}(\bm{x})
    \end{bmatrix} + 
    \bm{A}(\bm{x}) 
    \underbrace{
    \begin{bmatrix}
    u_{1}\\ 
    \vdots \\ 
    u_{m}
    \end{bmatrix}
    }_{\bm{u}}
\end{align}

Then the state feedback control law

\begin{align}
    \bm{u} = -\bm{A}^{-1}(\bm{x}) \begin{bmatrix}\mathcal{L}_{f}^{r_1}h_{1}\\ \vdots \\ \mathcal{L}_{f}^{r_m}h_{m}\end{bmatrix} + \bm{A}^{-1}(\bm{x}) \bm{v}
\end{align}
yields the linear closed-loop system

\begin{align}
    \begin{bmatrix}y_{1}^{r_1}\\ \vdots \\ y_{m}^{r_m}\end{bmatrix} = \underbrace{\begin{bmatrix}v_{1}\\ \vdots \\ v_{m}\end{bmatrix}}_{\bm{v}}
\end{align}

\begin{remark}
    If $r_{1}+r_{2}+ \cdots + r_{m}= n$, the system \eqref{eq: dynamic system} is input-to-state or full-state feedback linearizable \cite{isidori1985nonlinear} since the set of functions
    \begin{align}
        \Phi_{k}^{i}(\bm{x}) = \mathcal{L}_{f}^{k-1}h_{i}(\bm{x}), \quad i \leq k \leq r_{i}, 1 \leq i \leq m
    \end{align}
    completely defines a local coordinate transformation at $\bm{x}_{0}$.
\end{remark}

    \section{Reach-avoid backstepping}

In this section, we introduce a constructive methodology to generate ECGBFs that guarantee the existence of reach-avoid controllers with respect to the safe set and target set under assumption \eqref{assumption on sets}. Building on these formal certificates, we develop a systematic approach to design controllers with provable guarantees that the system output can reach the given target set while avoiding unsafe outputs.



Motivated by the methods presented in \cite{cohen2024constructive,taylor2022safe}, we kick off the introduction of the proposed framework by synthesising a reach-avoid controller 
$\bm{k}_1(\bm{y}(\bm{x}))$ 
for the single-integrator dynamical system described below,
\begin{align}
    \dot{\bm{y}} = \bm{v}. \label{eq: single integrator}
\end{align}

Suppose $\psi(\bm{y})$ is an ECGBF for system \eqref{eq: single integrator} with respect to safe set $\mathcal{C}$ and target set $\mathcal{X}^r$, by Theorem \ref{theorem: reach-avoid controller synthesis}, then there exists $\lambda>0$ such that
\begin{align}
    \sum_{i=1}^m \frac{\partial \psi}{\partial y_i} \cdot k_1^i(\bm{y}(\bm{x}))
    \geq \lambda \psi(\bm{y}(\bm{x})), \forall \bm{x} \in \overline{\mathcal{C} \setminus \mathcal{X}^r}
\end{align}
where $\bm{k}_1(\bm{y}(\bm{x})) = \begin{bmatrix}
    k_1^1(\bm{y}(\bm{x})) & \cdots & k_1^m(\bm{y}(\bm{x}))
\end{bmatrix}^\top \in \mathbb{R}^m$.
This implies that the controller $\bm{k}_1(\bm{y}(\bm{x}))$ can steer the system \eqref{eq: single integrator} with formal guarantees that state trajectories stay within the safe set $\mathcal{C}$ and finally enter into the target set $\mathcal{X}^r$, thus ensuring the satisfaction of reach-avoid requirements on the output $\bm{y}$.

Now we consider a MIMO system of the form \eqref{eq: dynamic system} 
with vector relative degree $\{\gamma_1, \cdots, \gamma_m \}$ at a point $\bm{x}_0 \in \mathcal{X}$.
Let $\bm{\eta}^i$ be the transformed variable as defined in \eqref{eq: eta_i},
the corresponding strict feedback form with respect to the $i$th output $y_i$
can be written as
\begin{align}
    \label{eq: strict feedback form for MIMO with uniform relateive degrees}
    \dot{\bm{\eta}}^i 
    = \begin{bmatrix}
    \dot{\eta}_{1}^i\\ 
    \vdots \\ 
    \dot{\eta}_{\gamma_i}^i
    \end{bmatrix} 
    = \begin{bmatrix}
    \eta_{2}^i\\ 
    \vdots \\ 
    \eta_{\gamma_i}^i\\ 
    \mathcal{L}_{f}^{\gamma_i}h_i(\bm{x}) + \sum_{j=1}^{m}\mathcal{L}_{g_j}(\mathcal{L}_{f}^{\gamma_i -1 }h_i(\bm{x})) u_{j}
    \end{bmatrix} 
\end{align}
Leveraging this cascaded architecture, we introduce the following ECGBF candidate for MIMO systems \eqref{eq: dynamic system} with the form:

\begin{align}
    \label{eq: RA function for MIMO}
    {\Psi}(\bm{x}) = \psi(\bm{y}(\bm{x})) - \sum_{i=1}^m \sum_{l=1}^{\gamma_i-1}\frac{1}{2\mu_{l}^i}\|\eta_{l+1}^i-k_{l}^i(\bm{z}_{l}^i)\|_{2}^{2}
\end{align}

\noindent where $\bm{z}_{l}^i= (\eta_{1}^i, \cdots, \eta_{l}^i)$, and all $k_l^i(\bm{z}_l^i)$ are auxiliary functions for constructing the actual reach-avoid controller.
Let $\mathcal{C}_\Psi = \{ \bm{x} \in \mathbb{R}^n | \Psi(\bm{x}) >0 \}$.
We summarize our main results
in the following theorem.
\begin{theorem}
    \label{theorem: reach-avoid backstepping for MIMO} 
    If $\psi(\bm{y}(\bm{x}))$ is an ECGBF of system \eqref{eq: single integrator} defined on the safe set $\mathcal{C}$ and target set $\mathcal{X}^{r}$ that satisfies assumption \eqref{assumption on sets},
    then there exists $\lambda>0$ such that the function
    $\Psi(\bm{x}): \mathbb{R}^{n} \rightarrow \mathbb{R}$ defined in \eqref{eq: RA function for MIMO} is an ECGBF for system \eqref{eq: dynamic system} with respect to the safe set $\mathcal{C}_\Psi$ and target set $\mathcal{X}^r$.
\end{theorem}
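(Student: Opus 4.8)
The plan is to verify the defining ECGBF inequality for $\Psi$ directly: at every $\bm{x}\in\overline{\mathcal{C}_\Psi\setminus\mathcal{X}^r}$ I will exhibit one admissible input realizing $\mathcal{L}_{\Psi,\bm{u}}\ge\lambda\Psi(\bm{x})$ for a suitable $\lambda>0$, which suffices since $\sup_{\bm{u}}\mathcal{L}_{\Psi,\bm{u}}$ dominates any particular choice. First I would record the containment $\mathcal{C}_\Psi\subseteq\mathcal{C}$: the subtracted sum in \eqref{eq: RA function for MIMO} is nonnegative, so $\Psi\le\psi$ and hence $\{\Psi>0\}\subseteq\{\psi>0\}$. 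Consequently $\overline{\mathcal{C}_\Psi\setminus\mathcal{X}^r}\subseteq\overline{\mathcal{C}\setminus\mathcal{X}^r}$, so the hypothesis that $\psi$ is an ECGBF for the single integrator \eqref{eq: single integrator} — equivalently the existence of a virtual law $\bm{k}_1(\bm{y})$ with $\sum_{i=1}^m\frac{\partial\psi}{\partial y_i}k_1^i\ge\lambda\psi$ — is available at exactly the points where $\Psi$ must be certified.

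Next I would introduce the error coordinates $e_l^i=\eta_{l+1}^i-k_l^i(\bm{z}_l^i)$ and differentiate $\Psi=\psi-W$, with $W=\sum_{i,l}\frac{1}{2\mu_l^i}(e_l^i)^2$, along \eqref{eq: dynamic system} in the cascaded coordinates \eqref{eq: strict feedback form for MIMO with uniform relateive degrees}. Since $\dot y_i=\eta_2^i=e_1^i+k_1^i$, the output term splits as $\dot\psi=\sum_i\frac{\partial\psi}{\partial y_i}k_1^i+\sum_i\frac{\partial\psi}{\partial y_i}e_1^i\ge\lambda\psi+\sum_i\frac{\partial\psi}{\partial y_i}e_1^i$, which is the base of the recursion: the ECGBF property absorbs the ``$\lambda\psi$'' part and leaves a residual coupling $\frac{\partial\psi}{\partial y_i}e_1^i$. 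I would then shape $\frac{1}{\mu_l^i}e_l^i\dot e_l^i$ stagewise, choosing each $k_{l+1}^i$ (for $l<\gamma_i-1$) to cancel the total derivative $\dot k_l^i$ appearing in $\dot e_l^i$, to cancel the coupling fed down from the previous stage, and to inject the damping $\tfrac{\lambda}{2}e_l^i$. A short computation shows each stage then contributes exactly $-\frac{\lambda}{2\mu_l^i}(e_l^i)^2$ and passes a single new coupling $-\frac{1}{\mu_l^i}e_l^i e_{l+1}^i$ to the next stage. At the top of chain $i$ the genuine input enters through $v_i=\mathcal{L}_f^{\gamma_i}h_i+\sum_j\mathcal{L}_{g_j}(\mathcal{L}_f^{\gamma_i-1}h_i)u_j$; since the vector relative degree makes $\bm{A}(\bm{x})$ of rank $m$, I can set $\bm{u}=\bm{A}^{-1}(\bm{x})\big(\bm{v}-\bm{b}(\bm{x})\big)$, with $\bm{b}(\bm{x})$ the vector of drifts $\mathcal{L}_f^{\gamma_i}h_i$, to realize any desired $\bm{v}$ and close the recursion. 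Telescoping the couplings to zero yields $\dot\Psi\ge\lambda\psi-\sum_{i,l}\frac{\lambda}{2\mu_l^i}(e_l^i)^2=\lambda\Psi$, i.e. $\mathcal{L}_{\Psi,\bm{u}}\ge\lambda\Psi$, with the same $\lambda$ inherited from $\psi$.

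Chains of relative degree one ($\gamma_i=1$) contribute no error terms and are dispatched directly by $v_i=k_1^i$, so the argument covers the mixed-relative-degree case uniformly. The step I expect to be the main obstacle is the exact cancellation bookkeeping. Unlike the textbook single-chain case, the first virtual law $k_1^i(\bm{y})$ and the gradient $\partial\psi/\partial y_i$ depend on the outputs of all chains, so differentiating them injects cross-chain terms $\sum_j\frac{\partial k_1^i}{\partial y_j}\eta_2^j$ into each $\dot e_l^i$. I would handle this by folding the full total derivative $\dot k_l^i$ (cross-chain terms included) into the definition of $k_{l+1}^i$, leaving only the intra-chain $e_l^i e_{l+1}^i$ coupling at each stage; verifying that every such cross term is matched, and that the resulting virtual laws and final input are well defined and Lipschitz on $\overline{\mathcal{C}_\Psi\setminus\mathcal{X}^r}$ where $\bm{A}(\bm{x})$ is nonsingular, is the crux. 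One loose end worth flagging is that these virtual laws are not literally functions of $\bm{z}_l^i$ alone, as the notation in \eqref{eq: RA function for MIMO} suggests, but of the broader set of states coupled through $\psi$; this does not affect the estimate but should be stated carefully.
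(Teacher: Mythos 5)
Your proposal follows essentially the same route as the paper's proof: a backstepping recursion in the error coordinates $\eta_{l+1}^i-k_l^i(\bm{z}_l^i)$, choosing each virtual law $k_{l+1}^i$ so that every stage contributes $-\tfrac{\lambda}{2\mu_l^i}\|\eta_{l+1}^i-k_l^i\|^2$ while passing one coupling term to the next stage, closing the loop at the top of each chain via the invertibility of $\bm{A}(\bm{x})$, and using $\Psi\le\psi$ to restrict the inequality to $\overline{\mathcal{C}_\Psi\setminus\mathcal{X}^r}$. Your remark that the virtual laws actually depend on all outputs through $\partial\psi/\partial y_i$ and $\bm{k}_1(\bm{y})$, not on $\bm{z}_l^i$ alone, is a legitimate point of care that the paper's notation glosses over, but it does not change the argument.
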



\begin{proof}

    Given $\mathcal{C}$ and $\mathcal{X}^r$ satisfy assumption \eqref{assumption on sets},
    by taking the derivative of $\Psi(\bm{x})$ with respect to time, we have
    
    \small
    \begin{align} \label{eq: derivative of ECGBF}
        \dot{\Psi}(\bm{x}) 
        =& \sum_{i=1}^m \bigg[
        \frac{\partial \psi}{\partial y_i} \dot{y}_i
        - \sum_{l=1}^{\gamma_i-1}\frac{(\eta_{l+1}^i- k_{l}^i)^{\top}}{\mu_{l}^i}(\dot{\eta}_{l+1}^i-\sum_{s=1}^{l}\frac{\partial k_{l}^i}{\partial \eta_{s}^i}\dot{\eta}_{s}^i)
        \bigg]
    \end{align}
    \normalsize
    For each component $i \in \{1, \cdots, m \}$, by incorporating $k_l^i$ and arranging the corresponding terms\footnote{For conciseness, we abbreviate $k_l^i(z_l^i)$ as $k_l^i$.},
    
    \begin{align*}
        &\frac{\partial \psi}{\partial y_i} \dot{y}_i
        - \sum_{l=1}^{\gamma_i-1}\frac{(\eta_{l+1}^i- k_{l}^i)^{\top}}{\mu_{l}^i}(\dot{\eta}_{l+1}^i-\sum_{s=1}^{l}\frac{\partial k_{l}^i}{\partial \eta_{s}^i}\dot{\eta}_{s}^i) \\
        &= \frac{\partial \psi}{\partial y_i} (\eta_2^i - k_1^i + k_1^i) \\
        &- \sum_{l=1}^{\gamma_i-2}\frac{(\eta_{l+1}^i- k_{l}^i)^{\top}}{\mu_{l}^i}(\dot{\eta}_{l+1}^i - k_l^i + k_l^i -\sum_{s=1}^{l}\frac{\partial k_{l}^i}{\partial \eta_{s}^i}\dot{\eta}_{s}^i) \\
        &- \frac{(\eta_{\gamma_i}^i - k_{\gamma_i-1}^i)^\top}{\mu_{\gamma_i-1}^i}
        \bigg[
        \dot{\eta}_{\gamma_i}^i - \sum_{s=1}^{\gamma_i-1}\frac{\partial k_{\gamma_i-1}^i}{\partial \eta_{s}^i}\dot{\eta}_{s}^i
        \bigg] \\
        &= \frac{\partial \psi}{\partial y_i} k_1^i - (\eta_2^i - k_1^i)^\top
        \underbrace{
        \bigg[
        - \frac{\partial \psi}{\partial y_i} + \frac{k_2^i - \sum_{s=1}^1 \frac{\partial k_1^i}{\partial \eta_s} \dot{\eta}_s^i }{\mu_1^i}
        \bigg]}_{\mathcal{I}_1} \\
        &- (\eta_3^i - k_2^i)^\top 
        \underbrace{
        \bigg[
        \frac{\eta_2^i - k_1^i}{\mu_1^i} + 
        \frac{k_3^i - \sum_{s=1}^2 \frac{\partial k_2^i}{\partial \eta_s^i} \dot{\eta}_s^i}{\mu_2^i}
        \bigg]}_{\mathcal{I}_2} \\
        & \cdots 
    \end{align*}
\vspace{-0.2cm}
\small
\[
- (\eta_{\gamma_i}^i - k_{\gamma_i-1}^i)^\top
        \underbrace{
        \bigg[
        \frac{\eta_{\gamma_i-1}^i - k_{\gamma_i-2}^i}{\mu_{\gamma_i-2}^i} + 
        \frac{
        \dot{\eta}_{\gamma_i}^i
        - \sum_{s=1}^{\gamma_i-1} \frac{\partial k_{\gamma_i-1}}{\eta_s^i} \dot{\eta}_s^i}{\mu_{\gamma_i-1}}
        \bigg]}_{\ \mathcal{I}_{\gamma_i-1}}.
\]
\normalsize
    Given the function $\bm{k}_1(\bm{x})$, we start by picking a $\bm{k}_2(\bm{x})$ such that  $\mathcal{I}_1 = \frac{\lambda (\eta_{2}^i - k_1^i)}{2\mu_1^i}$ by construction. We repeat the process and recursively construct $\mathcal{I}_{l}$ to be equal to $\frac{\lambda (\eta_{l+1}^i - k_l^i)}{2\mu_l^i}$ by appropriately picking $\bm{k}_{l+1}(\bm{x})$ until we get to the last term $\mathcal{I}_{\gamma_i-1}$, where the desired $\bm{k}(\bm{x})$ appears in $\dot{\eta}_{\gamma_i}^i$.
    Finally, we define $\bm{k}(\bm{x})$ satisfying these recursive constructions, and its substitution into \eqref{eq: derivative of ECGBF} yields,
    \begin{align*}
        \dot{\Psi}(\bm{x}) &= \frac{\partial \psi}{\partial \bm{y}}\bm{k}_{1}(\bm{y}(\bm{x})) 
        - \sum_{i=1}^m \sum_{l=1}^{\gamma_i-1}\frac{\lambda}{2\mu_{l}^i}\| \eta_{l+1}^i- k_{l}^i(\bm{z}_{l})\|_{2}^{2} \\
        &\geq \lambda \psi(\bm{y}(\bm{x})) - \sum_{i=1}^m \sum_{l=1}^{\gamma_i-1}\frac{\lambda}{2\mu_{l}^i}\| \eta_{l+1}^i- k_{l}^i(\bm{z}_{l}^i)\|_{2}^{2} \\
        &= \lambda \Psi(\bm{x}),
    \end{align*}
    which implies that 
    \begin{align}
        \sup_{\bm{u}}\mathcal{L}_{\Psi,\bm{u}} \geq \mathcal{L}_{\Psi,\bm{k}(\bm{x})} \geq \lambda \Psi(\bm{y}(\bm{x})), \ \forall \bm{x} \in \overline{\mathcal{C} \setminus \mathcal{X}^r}.
    \end{align}
    The positivity of $\Psi(\bm{y})$ ensures $\psi(\bm{y})>0$ by implication, thereby the zero superlevel set of $\Psi(\bm{y})$ essentially defines a subset of the given safe set $\mathcal{C}$. Thus,
    \begin{align} \label{eq: subset ECGBF constraint}
        \sup_{\bm{u}}\mathcal{L}_{\Psi,\bm{u}} \geq \lambda \Psi(\bm{y}(\bm{x})), \ \forall \bm{x} \in \overline{\mathcal{C}_\Psi \setminus \mathcal{X}^r}
    \end{align}
    where $\mathcal{C}_\Psi = \{ \bm{x} \in \mathbb{R}^n | \Psi(\bm{x}) >0 \}$.
    Additionally, since $\mathcal{C}_\Psi$ can arbitrarily approximate $\mathcal{C}$ by adjusting each $\mu_l^i$, 
    and $\Psi(x)$ is constructed to be continuous such that $\mathcal{C}_\Psi$ is an open set without isolated points,
    thus $\mathcal{C}_\Psi$ satisfy the Assumption \eqref{assumption on sets}.
    By Theorem \eqref{theorem: reach-avoid controller synthesis} we can conclude that $\Psi(\bm{x})$ is also an ECGBF for the system \eqref{eq: dynamic system} with respect to the safe set $\mathcal{C}_\Psi$ and the target set $\mathcal{X}^r$.
    Furthermore, the synthesized controller $\bm{k}(\bm{x})$ is a reach-avoid controller with provable guarantees that all trajectories $\bm{x}(t)$ originating from $\mathcal{C}_\Psi$ will eventually enter the target set $\mathcal{X}^r$ while maintaining $\bm{x}(t) \in \mathcal{C}_\Psi$.
\end{proof}

We note that the proof of theorem \ref{theorem: reach-avoid backstepping for MIMO} is a constructive one, and it establishes not just a valid ECGBF candidate $\Psi(\bm{x})$ but also provides a way to obtain the reach-avoid controller recursively from $\bm{k_1}(\bm{x})$ using backstepping procedure. In practice, one can obtain this controller $\bm{k}(\bm{x})$ without actually constructing the ECGBF $\Psi(\bm{x})$. We present this result in the following proposition. 

\begin{proposition}
    Given that $\Psi(\bm{x}): \mathbb{R}^n \rightarrow \mathbb{R}$ as defined in \eqref{eq: RA function for MIMO} is an ECGBF for system \eqref{eq: dynamic system} with respect to $\mathcal{C}_\Psi$ and target set $\mathcal{X}^r$, any locally Lipschitz controller $\bm{k}(\bm{x}) \in \mathcal{K}(\bm{x})$ is a reach-avoid controller where
    \begin{align}
        \mathcal{K} (\bm{x}) = \{ \bm{k}(\bm{x}) \in \mathbb{R}^m \ | \ \text{constraints} \ \eqref{eq: RA controller for MIMO} \ \text{holds} \}
    \end{align} with
    
    \begin{equation}
        \label{eq: RA controller for MIMO}
        \left\{
        \begin{split}
            & \frac{\partial \psi(\bm{y}(\bm{x}))}{\partial \bm{y}} \cdot \bm{k}_1(\bm{y}(\bm{x})) \geq \lambda \psi(\bm{y}(\bm{x})), \forall \bm{x} \in \overline{\mathcal{C} \setminus \mathcal{X}^r} \\
            \bm{k}_2^i(\bm{z}^i_2) &= \mu_{1}^i\frac{\partial \psi(\bm{y}(\bm{x}))}{\partial \bm{y}_i} + \sum_{s=1}^{1}\frac{\partial k_{1}^i}{\partial \eta_{s}^i}\eta_{s+1}^i + \frac{\lambda}{2}(\eta_{2}^i- k_{1}^i) \\
            \bm{k}_l^i(\bm{z}_l^i) &= - \frac{\mu_{l-1}^i(\eta_{l-1}^i - k_{l-2}^i)}{\mu_{l-1}^i} + \sum_{s=1}^{i-1} \frac{\partial k_{l-1}^i}{\partial \eta_s} \eta_{s+1}^i \\
            & \qquad \quad + \frac{\lambda}{2} (\eta_l^i - k_{l-1}^i), \quad \forall l \in \{3, \cdots, \gamma_i-1\}\\
            b_i(\bm{x}) &= - \mathcal{L}_f^{\gamma_i} h_i(\bm{x}) - \frac{\mu_{\gamma_i-1}^i(\eta_{\gamma_i-1}^i- k_{\gamma_i-2}^i)}{\mu_{\gamma-2}} \\
            & \qquad \quad \quad + \sum_{s=1}^{\gamma_i-1}\frac{\partial k_{\gamma_i-1}^i}{\partial \eta_{s}^i}\eta_{s+1}^i+ \frac{\lambda}{2}(\eta_{\gamma_i}^i- k_{\gamma_i-1}^i) \\
            \lambda > & 0, \mu_l^i >0, \forall l \in \{1, \cdots, \gamma_i-1\}, \forall i \in \{1, \cdots, m\}. \\
            \bm{b}(\bm{x}) &= \begin{bmatrix}
                b_1(\bm{x}) & \cdots & b_m(\bm{x})
            \end{bmatrix}^\top \\
            \bm{A}(\bm{x}) &\text{ is defined as in \eqref{matrix A(x)}} \\
            \bm{k}(\bm{x}) &= \bm{A}^{-1}(\bm{x}) \bm{b}(\bm{x})
        \end{split}
        \right.
    \end{equation}
\end{proposition}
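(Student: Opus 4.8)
The plan is to treat this proposition as the explicit, closed-form counterpart of the constructive proof of Theorem \ref{theorem: reach-avoid backstepping for MIMO}: rather than establishing a genuinely new fact, the proof shows that the feedback $\bm{k}(\bm{x}) = \bm{A}^{-1}(\bm{x})\bm{b}(\bm{x})$ assembled from the constraints \eqref{eq: RA controller for MIMO} is precisely the controller produced by the backstepping recursion there, and therefore automatically lies in the ECGBF admissible set $\mathcal{K}_e(\bm{x})$ of Theorem \ref{theorem: reach-avoid controller synthesis} instantiated for $\Psi$. The reach-avoid guarantee then follows by a direct invocation of Theorem \ref{theorem: reach-avoid controller synthesis} on the pair $(\mathcal{C}_\Psi, \mathcal{X}^r)$. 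Concretely, the core of the argument is the set inclusion $\mathcal{K}(\bm{x}) \subseteq \mathcal{K}_e(\bm{x})$.

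First I would unwind the recursion appearing in the proof of Theorem \ref{theorem: reach-avoid backstepping for MIMO}. Using the strict-feedback identities $\dot{\eta}_s^i = \eta_{s+1}^i$ for $s < \gamma_i$, each requirement $\mathcal{I}_l = \frac{\lambda}{2\mu_l^i}(\eta_{l+1}^i - k_l^i)$ can be solved explicitly for the next auxiliary function. I would verify that solving $\mathcal{I}_1$ reproduces the stated formula for $\bm{k}_2^i$, and that solving $\mathcal{I}_l$ for $2 \le l \le \gamma_i-2$ reproduces the stated formula for $\bm{k}_l^i$ (up to the evident index shift, where the coupling term $\frac{\eta_{l-1}^i - k_{l-2}^i}{\mu_{l-2}^i}$ enters from the preceding stage). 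The terminal stage $\mathcal{I}_{\gamma_i-1}$ is where the true input appears, through $\dot{\eta}_{\gamma_i}^i = \mathcal{L}_f^{\gamma_i}h_i(\bm{x}) + \sum_{j=1}^m \mathcal{L}_{g_j}(\mathcal{L}_f^{\gamma_i-1}h_i(\bm{x}))u_j$; imposing $\mathcal{I}_{\gamma_i-1} = \frac{\lambda}{2\mu_{\gamma_i-1}^i}(\eta_{\gamma_i}^i - k_{\gamma_i-1}^i)$ and stacking over $i = 1, \dots, m$ yields the linear system $\bm{A}(\bm{x})\bm{u} = \bm{b}(\bm{x})$, with $\bm{b}$ given componentwise by the formula for $b_i$. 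Since the vector relative degree $\{\gamma_1, \dots, \gamma_m\}$ makes $\bm{A}(\bm{x})$ of rank $m$ on a neighborhood, $\bm{A}^{-1}(\bm{x})$ exists there and $\bm{k}(\bm{x}) = \bm{A}^{-1}(\bm{x})\bm{b}(\bm{x})$ is well defined.

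With the identification complete, substituting $\bm{k}(\bm{x})$ into $\dot{\Psi}$ collapses every cross term by construction, leaving $\dot{\Psi}(\bm{x}) = \frac{\partial \psi}{\partial \bm{y}}\bm{k}_1(\bm{y}(\bm{x})) - \sum_{i=1}^m\sum_{l=1}^{\gamma_i-1}\frac{\lambda}{2\mu_l^i}\|\eta_{l+1}^i - k_l^i\|_2^2$, exactly as in the proof of Theorem \ref{theorem: reach-avoid backstepping for MIMO}. The first (inequality) constraint of \eqref{eq: RA controller for MIMO}, namely $\frac{\partial \psi}{\partial \bm{y}}\bm{k}_1 \geq \lambda\psi$, then gives $\mathcal{L}_{\Psi,\bm{k}(\bm{x})} = \dot{\Psi}(\bm{x}) \geq \lambda\Psi(\bm{x})$ for all $\bm{x} \in \overline{\mathcal{C}_\Psi \setminus \mathcal{X}^r}$, so $\bm{k}(\bm{x})$ satisfies the defining inequality of $\mathcal{K}_e(\bm{x})$. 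Since $\bm{k}(\bm{x})$ is assumed locally Lipschitz, Theorem \ref{theorem: reach-avoid controller synthesis} applied to the ECGBF $\Psi$ on $(\mathcal{C}_\Psi, \mathcal{X}^r)$ certifies that $\bm{k}(\bm{x})$ is a reach-avoid controller, which is the claim.

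I expect the main obstacle to be the index bookkeeping in the recursion: one must carefully track the stagewise coupling terms $\frac{\eta_l^i - k_{l-1}^i}{\mu_{l-1}^i}$ so that the telescoping cancellation produces exactly the negative remainder $-\sum_{i}\sum_{l}\frac{\lambda}{2\mu_l^i}\|\eta_{l+1}^i - k_l^i\|_2^2$, and to confirm that the closed-form expressions for $\bm{k}_l^i$ and $b_i$ match the recursion once $\dot{\eta}_s^i = \eta_{s+1}^i$ is substituted. This matching also flags the apparent typographical slips in \eqref{eq: RA controller for MIMO} (the factor $\mu_{l-1}^i/\mu_{l-1}^i$ should read $\mu_{l-1}^i/\mu_{l-2}^i$, and the summation range $\sum_{s=1}^{i-1}$ should read $\sum_{s=1}^{l-1}$), which I would correct before presenting the recursion. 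A secondary point, worth a sentence, is justifying the local Lipschitz regularity of $\bm{k}(\bm{x})$ and the neighborhood on which $\bm{A}^{-1}(\bm{x})$ exists, so that the hypotheses of Theorem \ref{theorem: reach-avoid controller synthesis} are genuinely met.
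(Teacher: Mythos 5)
Your proposal is correct and follows essentially the same route as the paper, whose proof is a one-line appeal to the construction of each $\mathcal{I}_l$ in the proof of Theorem \ref{theorem: reach-avoid backstepping for MIMO}; you simply make explicit the identification of the closed-form constraints with that recursion and the subsequent invocation of Theorem \ref{theorem: reach-avoid controller synthesis} on $(\mathcal{C}_\Psi,\mathcal{X}^r)$. Your observations about the index slips in \eqref{eq: RA controller for MIMO} (the $\mu_{l-1}^i/\mu_{l-1}^i$ factor and the summation bound $\sum_{s=1}^{i-1}$) are accurate and worth keeping.
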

\begin{proof}
    By the construction of each $\mathcal{I}_l$ in Theorem \ref{theorem: reach-avoid backstepping for MIMO}'s proof, $\bm{k}(\bm{x})$ can be recursively defined as in constraints \eqref{eq: RA controller for MIMO}.
\end{proof}

\begin{remark}
    Theorem \eqref{theorem: reach-avoid backstepping for MIMO} demonstrates that we can achieve reach-avoid guarantees for the safe subset $\mathcal{C}_\Psi \subseteq \mathcal{C}$ and target set $\mathcal{X}^r$ by constructing a controller $\bm{k}(\bm{x})$ based on a single-integrator system \eqref{eq: single integrator} and leveraging the strict feedback form \eqref{eq: strict feedback form for MIMO with uniform relateive degrees}.
    The proposed reach-avoid controller synthesis approach avoids solving SOS programs for the original system \eqref{eq: dynamic system}, instead requiring only optimization on \eqref{eq: single integrator}, which is computationally lighter and scalable for high-dimensional systems.
    Additionally, 
    excessively large $\mu_l^i$ may induce rapid changes of input control signals, which could lead to instability in both simulations and real-world deployments.
    Conversely, overly small $\mu_l^i$ can shrink the zero superlevel set of $\Psi(\bm{x})$, potentially limiting the applicability of the synthesized controller and may result in an empty interaction with the target set $\mathcal{X}^r$ thereby violating assumption \eqref{assumption on sets}.
\end{remark}

\begin{example}
    Consider the following control-affine system
    \small
    \begin{align} \label{eq: ex1 system}
        \dot{\bm{x}}
        =  
        \begin{bmatrix}
            x_2 \\
            -x_2^2+x_3^3+x_4-5x_1 \\
            x_4 \\
            2x_2x_3+2x_3x_4-7x_1
        \end{bmatrix} &+
        \begin{bmatrix}
            0  \\
            2x_2^2+5x_3x_4-9x_1 \\
            0 \\
            2x_2x_3^3+2x_3x_4+7x_2
        \end{bmatrix} u_1 \notag \\
        &+ 
        \begin{bmatrix}
            0 \\
            3x_2^3+2x_3x_4 -10x_1 \\
            0 \\
            7x_2^2x_3+2x_3x_4
        \end{bmatrix} u_2
    \end{align}
    \normalsize
    with $\bm{x} = [x_1, x_2, x_3, x_4]^\top \in \mathbb{R}^4$, $u_1,u_2 \in \mathbb{R}$ and output
    $\bm{y} = [x_1, x_3]^\top \in \mathbb{R}^2$.

    \begin{figure}[htbp!]
    \centering
    \includegraphics[width=0.9\linewidth]{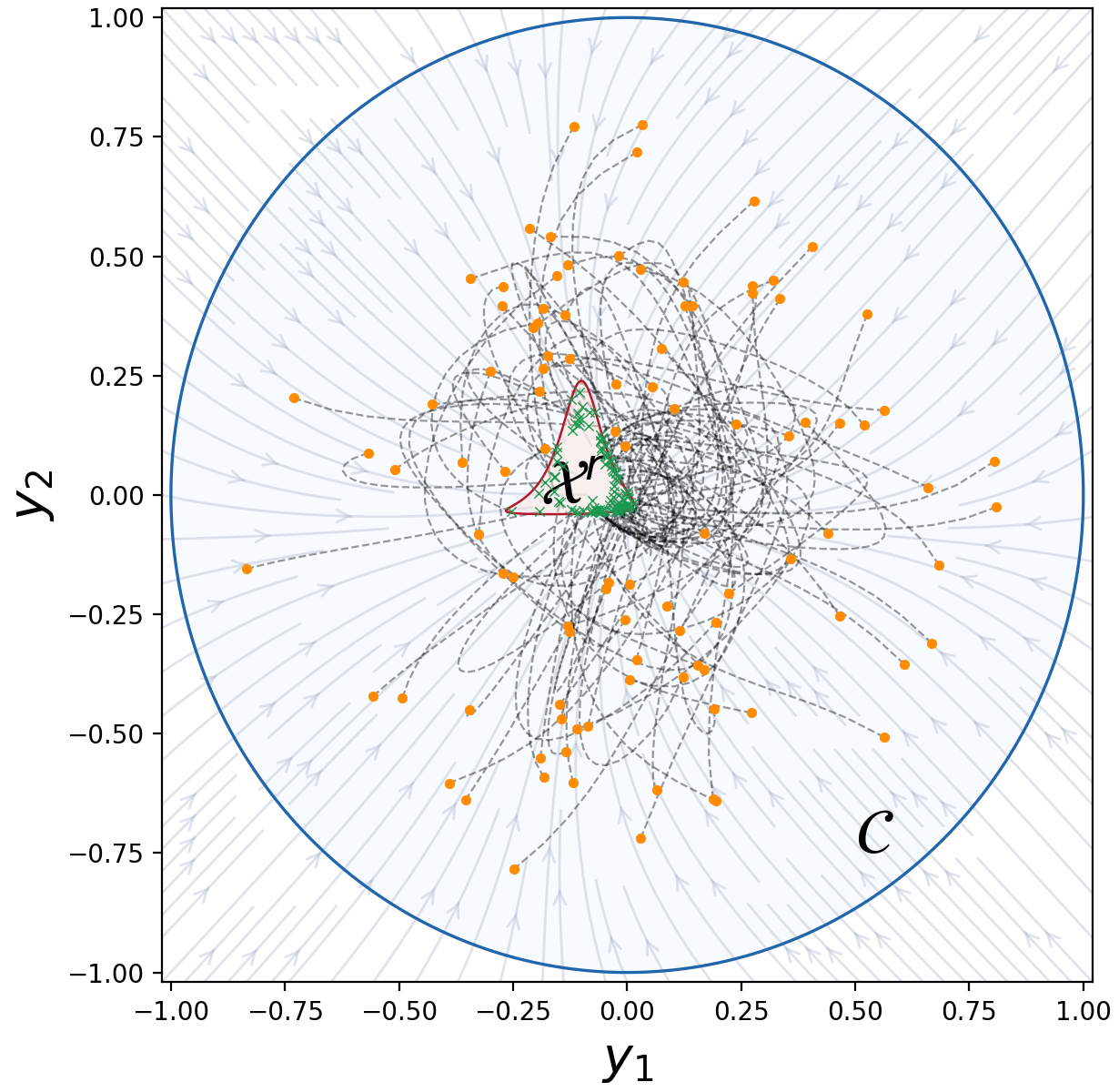}
    \caption{$100$ randomly sampled trajectories (dashed line from orange to green cross) of system \eqref{eq: ex1 system} with synthesized reach-avoid controller. The grey vector-field indicates the function $\bm{k}_1(\bm{y}(\bm{x}))$, from which we construct $\bm{k}(\bm{y}(\bm{x}))$ via backstepping.}
    \label{fig:ex00}
    \end{figure}
    
    The system has a vector relative degree $\{2,2\}$. Given the safe set $\mathcal{C} = \{ \bm{x} \in \mathbb{R}^4 | 1- \bm{y}_1(\bm{x})^2 - \bm{y}_2(\bm{x})^2 < 0 \}$ and target set $\mathcal{X}^r = \{ \bm{x} \in \mathbb{R}^4 | 2 (\frac{\bm{y}_2(\bm{x})-0.1}{2})^2 + 3(\frac{\bm{y}_1(\bm{x})+0.4}{3})^4 + (3\bm{y}_1(\bm{x})+0.3)^2 (4\bm{y}_2(\bm{x})+0.2)^2 < 0.01 \}$, the control objective is to design a state feedback controller that ensures the output trajectories remain within the safe set $\mathcal{C}$ and eventually enter into the target set $\mathcal{X}^r$. 
    By solving the optimization problem \eqref{algo: sos reach-avoid controller}, we obtain the controller $\bm{k}_1(\bm{y}(\bm{x}))$ with $\delta=-3.23 \times 10^{-10} \approx 0$, demonstrating that $\bm{k}_1(\bm{y}(\bm{x}))$ is a reach-avoid controller that guarantees the compliance of state $\bm{y}$ of the single-integrator system \eqref{eq: single integrator} with the required reach-avoid specifications. 
    By applying the backstepping-based methodology outlined in \eqref{theorem: reach-avoid backstepping for MIMO}, we synthesize the reach-avoid controller $\bm{k}(\bm{x})$ using $\bm{k}_1(\bm{y}(\bm{x}))$, and implement it in system \eqref{eq: ex1 system} to form the corresponding closed-loop system.
    We randomly sampled $100$ initial points within the set $\mathcal{C}_\Psi$ to validate the performance of our proposed approach. As depicted in Fig. \ref{fig:ex00}, all resulting output trajectories remain within the safe set $\mathcal{C}$ during evolution and ultimately reach the target set $\mathcal{X}^r$. 
    Furthermore, Fig. \ref{fig:ex01} demonstrates that the function $\Psi(\bm{x})$ increases monotonically along all trajectories, confirming that $\Psi(\bm{x})$ serves as an ECGBF for the system \eqref{eq: ex1 system} with respect to safe set $\mathcal{C}$ and target set $\mathcal{X}^r$.

    \begin{figure}[htbp!]
    \centering
    \includegraphics[width=0.9\linewidth]{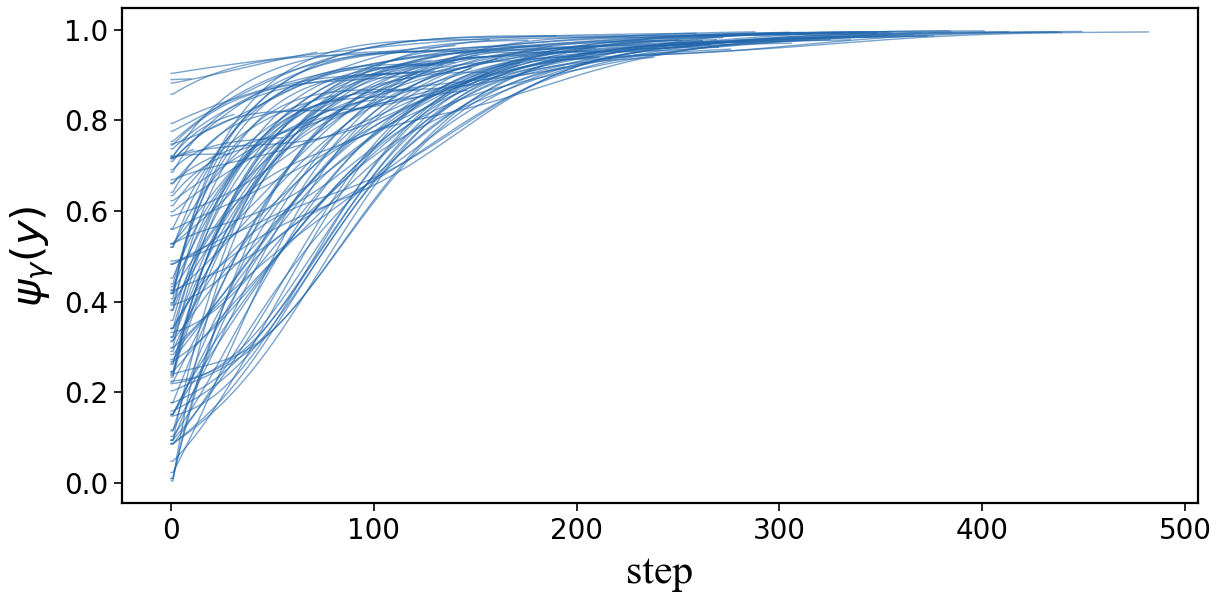}
    \caption{Evaluation of $\Psi(\bm{x})$ along $100$ randomly sampled trajectories.}
    \label{fig:ex01}
    \end{figure}

\end{example}

    \section{Experiments}

In this section, we numerically validate the theoretical developments of the proposed methodology on two underactuated dynamical systems. 
The SOSTOOLS \cite{sostools} toolbox with Mosek \cite{aps2019mosek} solver is employed to formulate and solve SOS optimization problems.



\begin{example} 
\label{ex: dubins car}
\begin{figure}
    \centering
    \includegraphics[width=0.9\linewidth]{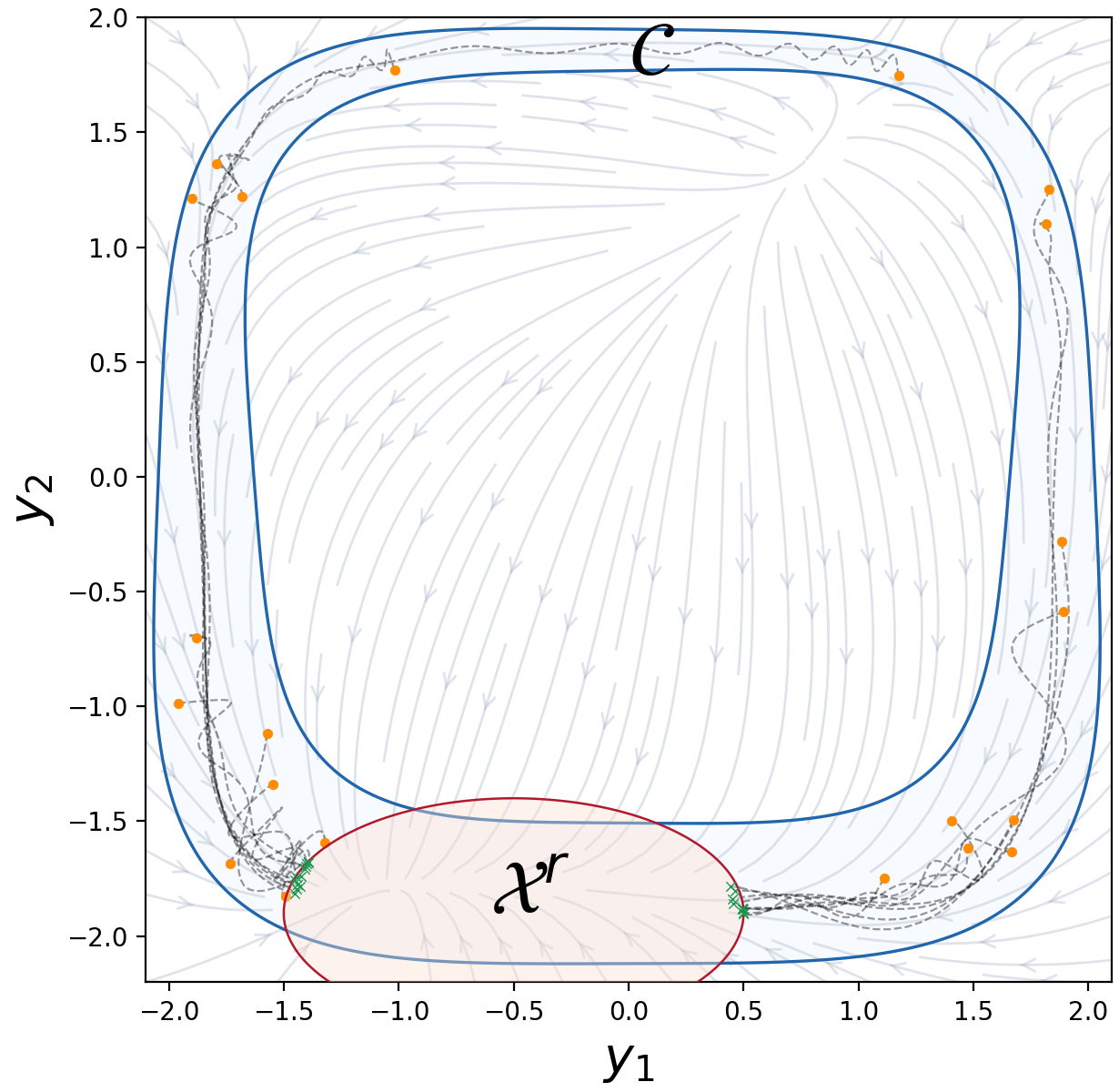}
    \caption{Simulated trajectories (dashed line from orange dot to green cross) of Dubins car on a racetrack with synthesized reach-avoid controller. The grey vector field indicates the function $\bm{k}_1(\bm{y}(\bm{x}))$, from which we construct $\bm{k}(\bm{y}(\bm{x}))$ via backstepping.}
    \label{fig:dubins}
\end{figure}

Consider the following modified Dubins car model
\begin{equation}
    \dot{\bm{x}} = 
    \left[\begin{array}{c}\dot{x}_1\\\dot{x}_2\\\dot{\theta}\\\dot{v}\end{array}\right]= 
    \underbrace{\left[\begin{array}{c}v\cos{\theta}\\v\sin{\theta}\\0\\0\end{array}\right]}_{f(\bm{x})} + \underbrace{\left[\begin{array}{cc}0 & 0\\0 & 0\\1 & 0\\0&1\end{array}\right]}_{g(\bm{x})}
    \underbrace{\left[\begin{array}{c}\omega\\a\end{array}\right]}_{\bf{u}},
    \label{eq.dubins_car}
\end{equation}
with $\bm{x} = [x_1,x_2,\theta,v]^\top \in \mathbb{R}^4$, where $(x_1,x_2)$ denotes the planar location, $\theta$ represents the heading angle, and $v$ is the forward velocity. The vehicle is actuated by the angular velocity $\omega$ and forward acceleration $a$ as control inputs $\bm{u} = [\omega, a]^\top \in \mathbb{R}^2$. The system has a vector relative degree $\{2,2\}$ with output $\bm{y} = [y_1, y_2]^\top = [x_1, x_2]^\top$.
Given the safe set $\mathcal{C} = \{\bm{x} \in \mathbb{R}^2 | 2(3.5 - y_2(\bm{x}))^2 - 1.5y_1(\bm{x}) - (y_1(\bm{x})^4 + y_2(\bm{x})^4 - 3.5^2)^2 > 0 \}$ and target set $\mathcal{X}^r= \{ \bm{x} \in \mathbb{R}^2 | (y_1(\bm{x})+0.5)^2+4(y_2(\bm{x})+1.9)^2 < 1\}$.
For simplicity, we neglect the size of the vehicle. Our goal is to generate a feedback controller that drives the vehicle into the target set $\mathcal{X}^r$ while preventing trajectory violations of the track boundaries throughout the entire maneuver.
As shown in Fig. \ref{fig:dubins}, the SOS optimized controller $\bm{k}_1(\bm{y}(\bm{x}))$ ensures the vector field of single-integrator system \eqref{eq: single integrator} converges to the target set $\mathcal{X}^r$ while strictly respecting the safety constraints.
The simulated trajectories from randomly sampled initial states demonstrate that the synthesized controller $\bm{k}(\bm{x})$ drives the dubins vehicle \eqref{eq.dubins_car} into the target set without violating the prescribed track boundaries before reaching.

\end{example}

\begin{example} 
\label{ex: robotic arm}
Consider the following two-link planar robotic arm,
\begin{equation}
    \dot{\bm{q}} =
    \underbrace{\left[\begin{array}{c}
    \dot{q}_1\\
    \dot{q}_2\\
    -\bm{M}^{-1}(\bm{C}\dot{\bm{q}}+\bm{G})
    \end{array}\right]}_{f(\bm{q})}+\underbrace{\left[\begin{array}{c}\bm{0}^{2\times2}\\\bm{M}^{-1}\end{array}\right]}_{g(\bm{q})}\underbrace{\left[\begin{array}{c}
    \tau_1\\
    \tau_2\end{array}\right]}_{\bm{u}}.
\end{equation}
with $\bm{q} = [q_1,q_2,\dot{q}_1,\dot{q}_2]^\top \in \mathbb{R}^4$, and $[q_1, q_2]^\top$ as the joint angles, and $[\dot{q}_1, \dot{q}_2]^\top$ denoting the joint velocities. The inertia matrix $\bm{M}$ is defined as
\[\bm{M} = \begin{bmatrix}
    M_{11} & M_{12}\\
    M_{21} & M_{22}
\end{bmatrix},\]
where $M_{11} = I_1 + I_2 + m_2(l_1^2 + l_{c_2}^2 + 2l_1l_{c_2}\cos{(q_2)})+m_1l_{c_1}^2$, $M_{12} = M_{21} =  I_2 + m_2(l_{c_2}^2 + l_1l_{c_2}\cos{(q_2)})$, $M_{22} = m_2l_{c_2}^2 + I_2$.
\[
\bm{C} =
\begin{bmatrix}
- m_2 l_1 l_{c_2} \sin(q_2) \dot{q}_2 & - m_2 l_1 l_{c_2} \sin(q_2) (\dot{q}_1 + \dot{q}_2) \\
m_2 l_1 l_{c_2} \sin(q_2) \dot{q}_1 & 0
\end{bmatrix}
\] is the Coriolis matrix. The centrifugal matrix,
\[
\bm{G} =
\begin{bmatrix}
(m_1 g l_{c_1} + m_2 g l_1) \cos(q_1) + m_2 g l_{c_2} \cos(q_1 + q_2) \\
m_2 g l_{c_2} \cos(q_1 + q_2)
\end{bmatrix}\]
denotes the gravity vector. $\bm{\tau}\in\mathbb{R}^2$ is the joint torque as control inputs.
Other parameters are listed in Table \ref{tab: parameters}.
\setlength\extrarowheight{1pt}
\begin{table}[htbp] 
    \centering
    \aboverulesep = 0pt
    \belowrulesep = 0pt
    \caption{Parameters of the 2-link planar robotic arm}
    \begin{adjustbox}{width=0.48\textwidth,center,keepaspectratio}
    \begin{tabular}{|M{0.03\textwidth}|M{0.03\textwidth}|M{0.03\textwidth}|M{0.03\textwidth}|M{0.03\textwidth}|M{0.03\textwidth}|M{0.03\textwidth}|M{0.03\textwidth}|M{0.05\textwidth}|}
        \toprule
         \multicolumn{2}{|c|}{Link Mass}  & \multicolumn{2}{c|}{Link Length} & \multicolumn{2}{c|}{Mass center} & \multicolumn{2}{c|}{Inertia moment} & Gravity \\
         \midrule
         $m_1$ & $m_2$ & $l_1$ & $l_2$ & $l_{c_1}$ & $l_{c_2}$ & $I_1$ & $I_2$ & $g$ \\
         \midrule
         1.0 & 1.0 & 4.0 & 4.0 & 2.0 & 2.0 & 0.02 & 0.02 & 9.81\\
         \bottomrule
    \end{tabular}
    \end{adjustbox}
    \label{tab: parameters}
\end{table}


\begin{figure}[htbp]
    \centering
    \includegraphics[width=0.9\linewidth]{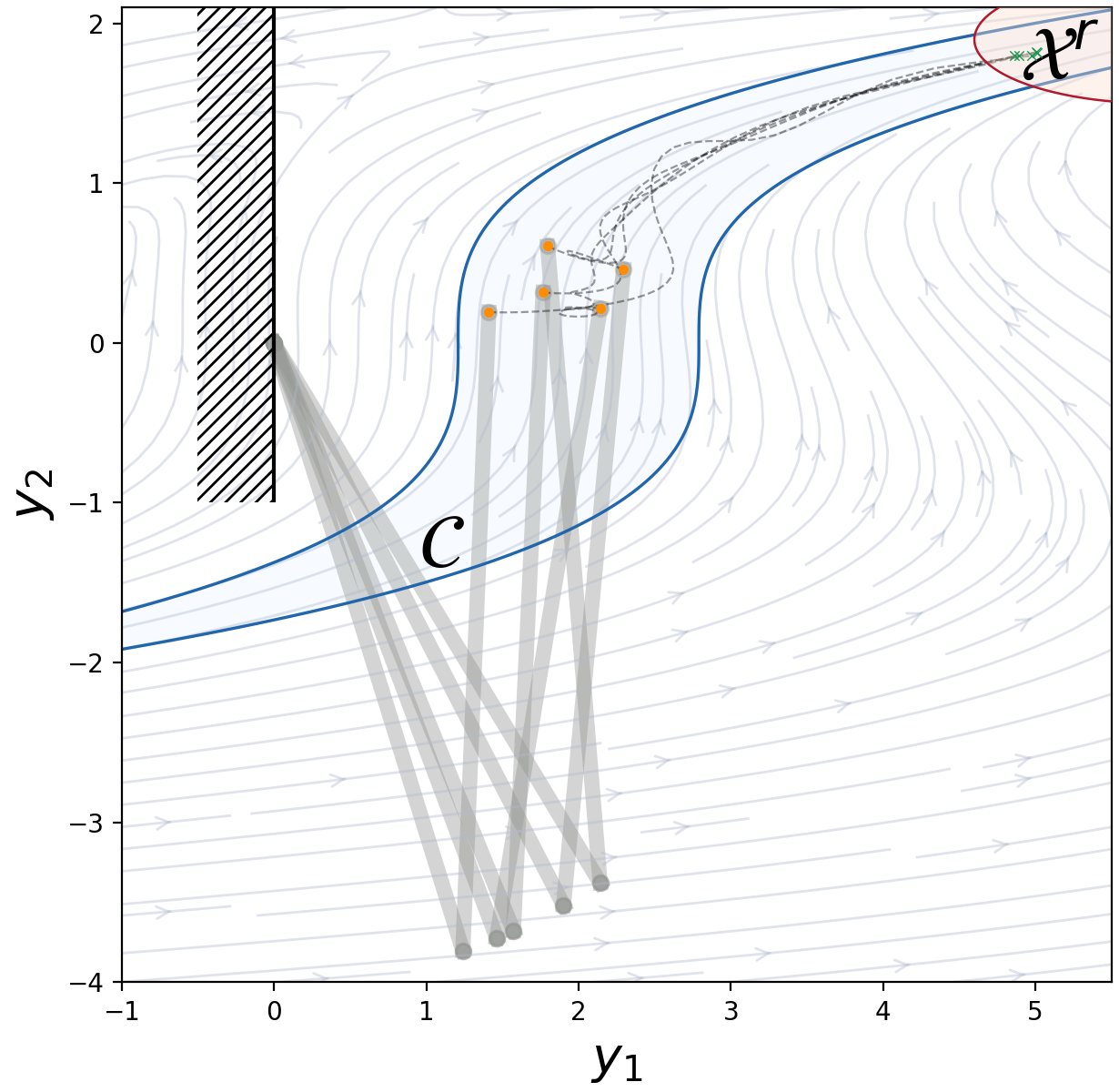}
    \caption{Simulated trajectories (dashed line from orange dot to green cross) of 2-link planar robotic arm. The grey vector field indicates the function $\bm{k}_1(\bm{y}(\bm{x}))$, from which we construct $\bm{k}(\bm{y}(\bm{x}))$ via backstepping.}
    \label{fig:planar_link03}
\end{figure}

The system has a vector relative degree $\{2,2 \}$ with the output $\bm{y} \in \mathbb{R}^2$ be the endpoint position defined as
\begin{equation}
    \dot{\bm{y}} = \left[\begin{array}{c}y_1\\y_2\end{array}\right] = \left[\begin{array}{c}
    l_1\cos{(q_1)} + l_2\cos{(q_1 + q_2)}\\
    l_1\sin{(q_1)} + l_2\sin{(q_1 + q_2)}
    \end{array}\right].
\end{equation}

Given safe set $\mathcal{C}= \{ \bm{q} \in \mathbb{R}^4 | -(4(y_1(\bm{q})-2)-2y_2(\bm{x})^3)^2 + 0.8y_2(\bm{q})^3 + 10 >0 \}$ and target set $\mathcal{X}^r = \{ \bm{q} \in \mathbb{R}^4 | ((y_1(\bm{q})- 5.8)^2 / 1.2^2) + ((y_2(\bm{q}) - 1.9)^2 / 0.4^2) - 1 <0 \}$,
we aim to design a controller that enables the robotic arm to execute desired motion within specified error bounds, ultimately achieving predefined terminal configurations.
Unlike traditional nominal trajectory tracking tasks, we introduce the safe set $\mathcal{C}$ governing admissible deviations during motion execution, and the target set $\mathcal{X}^r$ defining allowable final motion tolerances.
As evidenced by the simulated output trajectories in Fig. \ref{fig:planar_link03}, the synthesized controller successfully drives the arm from randomly sampled feasible initial configurations within admissible error bounds, ultimately converging to the target motions, while rigorously maintaining motion feasibility throughout the entire operation.

\end{example}

To illustrate the impact of different choices of $\mu_l^i$ in \eqref{eq: RA function for MIMO}, 
Fig. \ref{fig:twofigures} visualizes the zero super level sets of $\Psi(\bm{x})$ in examples \ref{ex: dubins car} and \ref{ex: robotic arm} by setting $[\theta,v]$ and $[\dot{q}_1, \dot{q}_2]$ as $\bm{0}$, respectively.
As observed, increasing $\mu_l^i$ results in a larger $\mathcal{C}_\Psi \subseteq \mathcal{C}$, wherein all states maintain provable reach-avoid guarantees with synthesized controller $
\bm{k}(\bm{x})$.

\begin{figure}[ht]
    \centering
    \begin{subfigure}{0.49\linewidth}
        \centering
        \includegraphics[width=\textwidth]{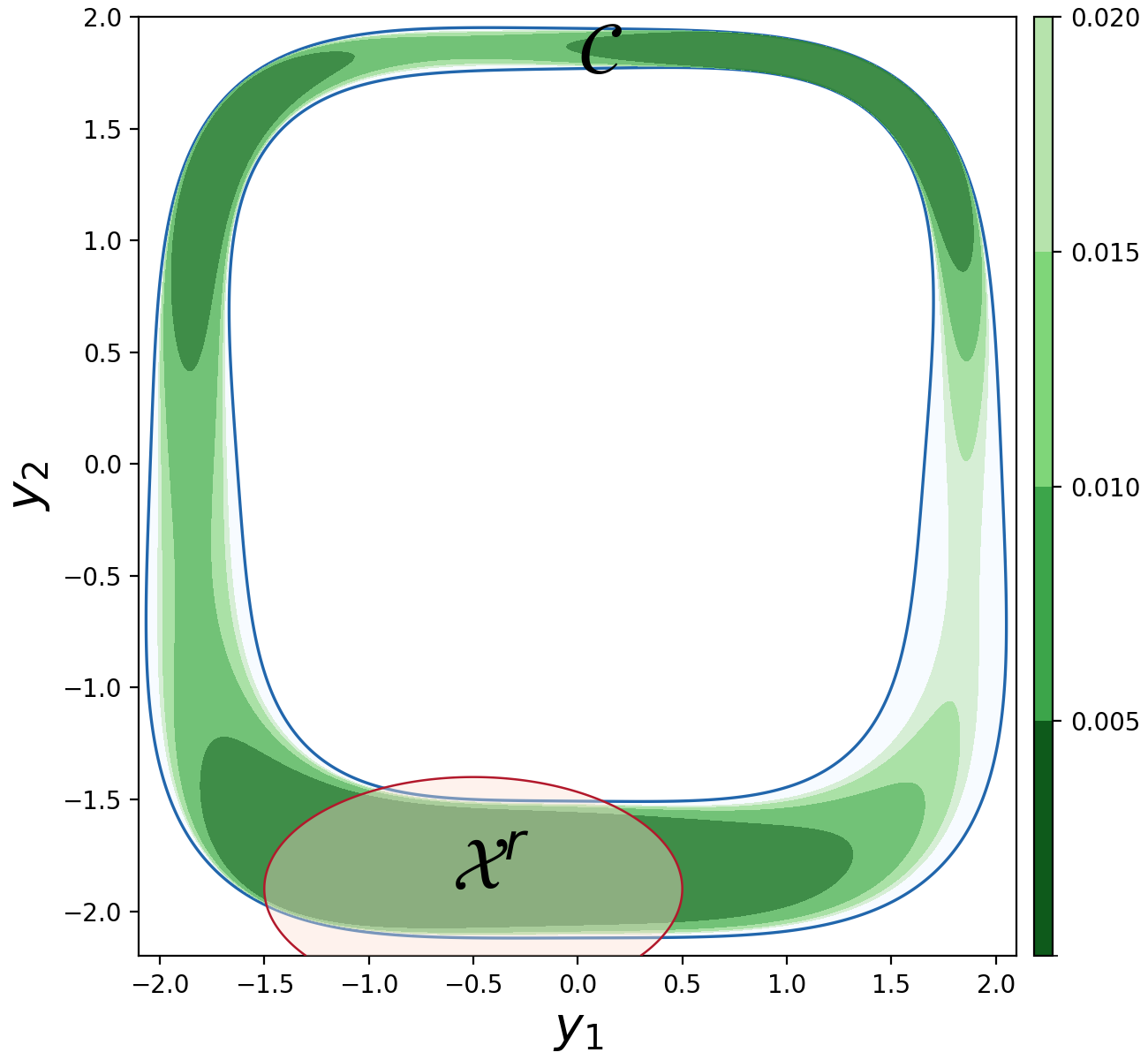}
        \caption{Evolving of $\mathcal{C}_\Psi$ in example \ref{ex: dubins car}}
        \label{fig:sub1}
    \end{subfigure}
    \hfill
    \begin{subfigure}{0.49\linewidth}
        \centering
        \includegraphics[width=\textwidth]{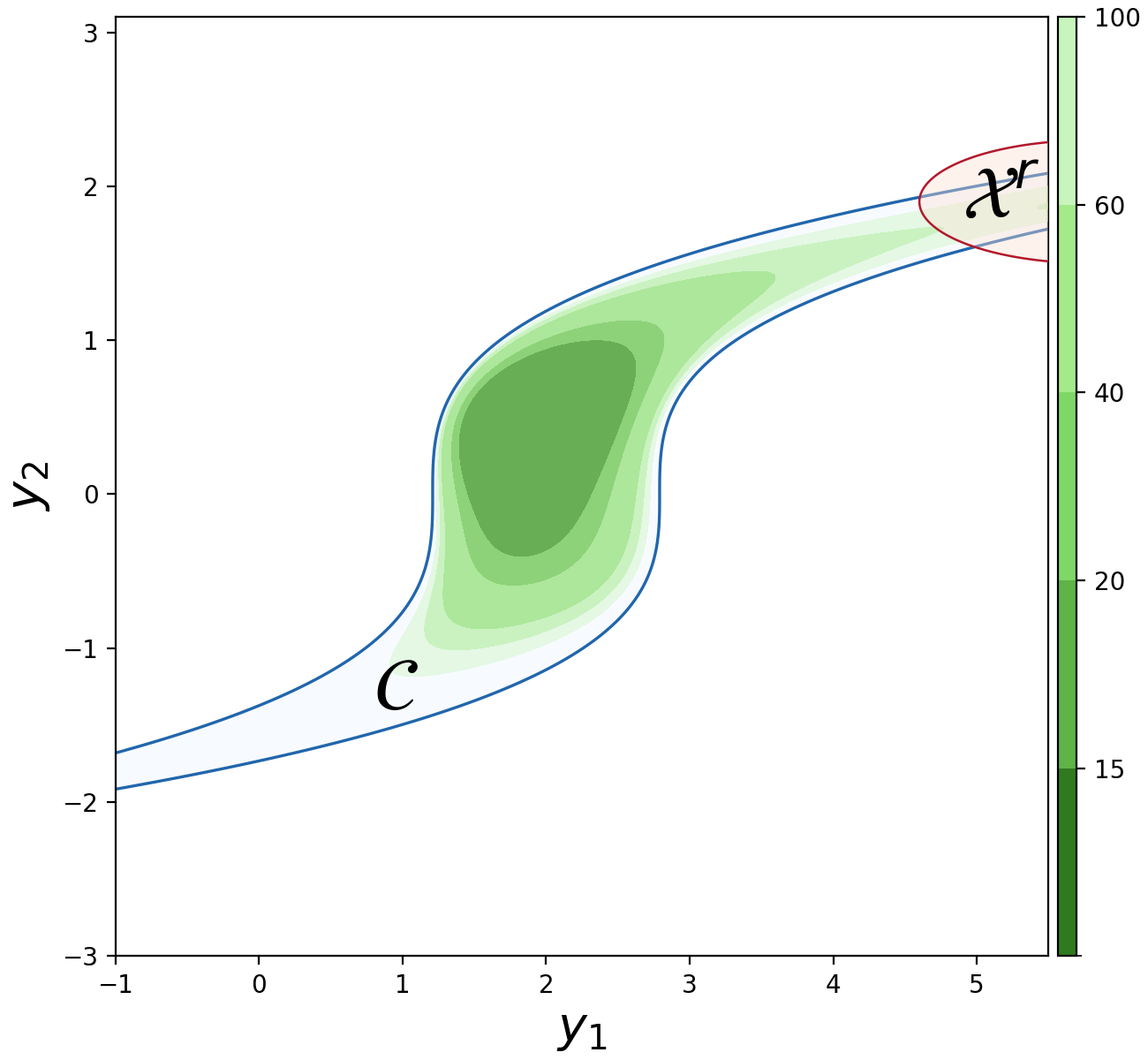}
        \caption{Evolving of $\mathcal{C}_\Psi$ in example \ref{ex: robotic arm}}
        \label{fig:sub2}
    \end{subfigure}
    \caption{
    Impact of increasing $\mu_l^i$ on $\mathcal{C}_\Psi$. Color bar indicates the different values of $\mu_l^i$.
    }
    \label{fig:twofigures}
\end{figure}
    \section{Conclusion}
In this paper, we present a novel framework for constructing formal certificates and corresponding controllers with provable reach-avoid guarantees.
The proposed method simplifies the design of reach-avoid controllers for nonlinear MIMO systems with mixed relative degrees by first finding a fundamental reach-avoid controller via SOS optimization, then systematically synthesizing the desired actual controller through feedback linearization and backstepping techniques, thereby establishing an efficient design framework.
We finally validate the effectiveness of our method on numerical simulations.
Future work will focus on enhancing robustness in practical scenarios by incorporating considerations for noise, and extending the framework's applicability to more general non-polynomial 
settings.

    \bibliographystyle{ieeetr}
    \bibliography{refs}

\end{document}